\newcommand{\Greedy}{\mathrm{GREEDY}}
\newcommand{\OPT}{\mathrm{OPT}}
\newcommand{\ON}{\mathrm{ON}}
\newtheorem{theorem}{Theorem}
\newtheorem{lemma}[theorem]{Lemma}
\theoremstyle{definition}
\newtheorem{definition}[theorem]{Definition}
\numberwithin{theorem}{section} \numberwithin{equation}{section}
\newenvironment{keyword}{\begin{flushleft}\textbf{KEYWORDS}\\}{\end{flushleft}}
\begin{document}

\title{Resource Augmentation Analysis of the Greedy Algorithm for the Online Transportation Problem}

\author{Stephen Arndt \thanks{Computer Science Department, University of Pittsburgh, Pittsburgh PA, 15260. sda19@pitt.edu} \and Josh Ascher\thanks{Computer Science Department, University of Pittsburgh, Pittsburgh PA, 15260. joa71@pitt.edu} \and Kirk Pruhs\thanks{Computer Science Department, University of Pittsburgh, Pittsburgh PA, 15260. kirk@cs.pitt.edu\\Supported in part by NSF grants  CCF-1907673,  CCF-2036077, CCF-2209654 and an IBM Faculty Award.}}

\maketitle

\begin{abstract}
We consider the online transportation problem set in a metric space containing parking garages of various capacities.
Cars arrive over time, and must be assigned to
an unfull parking garage upon their arrival. The objective is to minimize the aggregate distance
that cars have to travel to their assigned parking garage. We show that the natural greedy
algorithm, augmented with garages of $k\ge3$ times the capacity, is $\left(1 + \frac{2}{k-2}\right)$-competitive.
\end{abstract}

\begin{keyword}
Online Algorithms, Weighted Bipartite Matching, Competitive Analysis
\end{keyword}

\section{Introduction}

  We consider the natural online version of the classical transportation problem \cite{kenningtonalgsfornetworks, lawler1976combinatorial}. The 
setting is a metric space $\mathcal{M}$ that contains a collection $S = \{ s_1, s_2, \dots, s_m \}$ of server sites at various locations in $\mathcal{M}$. Each server site $s_j$ has a positive integer capacity $a_j$. 
Conceptually think of each server site $s_j$ as a parking garage with $a_j$ parking spaces. 
Over time, a sequence of requests $R = \{ r_1, r_2, \dots, r_n \}$  arrive at various locations
in the metric space. Think of the requests as cars that are looking for a space to park.
Upon the arrival of each request $r_i$ the online algorithm $\mathcal{A}$ must assign $r_i$ to an unfull server site $s_{\sigma(i)}$, that is one where the
number of previous requests assigned to $s_{\sigma(i)}$ is less than $a_{\sigma(i)}$.
The  cost incurred by such an assignment is the distance $d(s_{\sigma(i)}, r_i)$ between the location of $s_{\sigma(i)}$ and 
the location where $r_i$ arrived in $\mathcal{M}$. The objective  is to minimize $\sum_{i=1}^n d(s_{\sigma(i)}, r_i)$, the total cost to service the requests. 
So in our parking application, the objective would be to minimize the aggregate
distance that the cars have to travel to reach their assigned parking space.
In this setting, one standard performance metric of an online algorithm  is the competitive ratio. 
An online algorithm $\mathcal{A}$ is $c$-competitive 
 if for all instances $I$ it is the case that $\mathcal{A}(I) \le c \cdot \OPT(I)$, 
 where $\mathcal{A}(I)$ is the objective value attained by the online algorithm $\mathcal{A}$ on
 instance $I$ and $\OPT(I)$ is the optimal objective value for instance $I$.

\subsection{The Essential Story So Far}

An important special case of the online transportation problem is the online metrical matching
problem, which is when each $a_i=1$. 
In \cite{onlineweightedmatching,khullermitchell} it was shown that the
optimal competitive ratio for online metrical matching 
is  $(2n - 1)$-competitive.
For online metrical matching the best known competitive ratio for a randomized algorithm against an oblivious adversary is $O(\log^2 n)$~\cite{meyerson,random-O(log2k)}, which is obtained
by an algorithm that uses a greedy algorithm on the embedding of the metric
space into a hierarchically separated tree (HST), and the best known lower bound
for the competitiveness of a randomized algorithm against an oblivious adversary is $\Omega(\log n)$.
Thus for online transportation no deterministic algorithm can be better than $2n-1$ competitive,
and no randomized algorithm can be $o(\log n)$-competitive.

The most natural algorithm for the online transportation problem is the greedy
algorithm $\Greedy$ that assigns each request to the nearest unfull server site. 
So understanding the performance of $\Greedy$, when 
it performs well and when it performs poorly, is of some interest.
In \cite{onlineweightedmatching} it was shown that  the competitive ratio of $\Greedy$ is $2^n - 1$, even for online metrical matching  in a line metric. 

One way to get around this strong worst-case lower bound for $\Greedy$ is to use resource
augmentation analysis. In this setting, this means assuming that for the online algorithm
the capacity $c_j$ of each server site $s_j$ is $c_j=k \cdot a_j$, where
$k$ is an integer strictly greater than one, while still assuming that in the benchmark optimal matching
the capacity of the garage is $a_j$. 
\cite{onlinetransportwa} showed that for all instances $I$,

$$\Greedy_2(I) \le O\left(\min(n, \log C) \cdot \OPT(I)\right)$$
where $\Greedy_2(I)$ is the objective value for $\Greedy$ assuming that
each server site $s_j$ has capacity $c_j = 2 a_j$, and $C=\sum_{j=1}^n c_j$
is the aggregate server capacity.
Further \cite{onlinetransportwa} showed how to modify the greedy algorithm,
by artificially increasing the distances to garages that are more than half full
by a constant multiplicative factor, to obtain an algorithm M$\Greedy$, and showed that

$$\mathrm{M}\Greedy_2(I) \le O\left(  \OPT(I)\right)$$
That is, this modified greedy has a constant competitive ratio if the capacity of its
server sites is doubled. 
\cite{extra-server} shows how to obtain an $O(\log^3 n)$-competitive randomized algorithm
 using HST's and resource augmentation of an additional
one server per site. 

Another way to get around the strong worst-case lower bound for $\Greedy$ is to use average-case analysis. 
\cite{tsaiagreedy} analyzes the average-case performance of $\Greedy$ for online
metrical matching in several natural
metric spaces. For example \cite{tsaiagreedy} shows that if the locations of the
requests and servers are uniformly and independently drawn from a Euclidean circle then in the limit as $n$ grows,
$$E[\Greedy(I)] \le 2.3 \sqrt{n} \cdot E[\OPT(I)]$$
As best as we can tell there are not results in the literature
on average-case analysis of $\Greedy$ for online metric matching or transportation in a general metric.

There are a significant number of papers that contain  (both average-case and worst-case) results for online metrical matching
and online transportation in metrics of special interest, most notably a line metric.
As our interests lie with general metric spaces, we will not survey these results here.

\subsection{Our Results}

Our main contribution is to extend the results in \cite{onlinetransportwa} to show that
the algorithm $\Greedy$ is constant competitive with resource augmentation $k \ge 3$. 
More specifically we show that

\begin{theorem}\label{thm:comp. general k}  For $k\ge 3$,
$\Greedy_k(I) \leq \left( 1 + \frac{2}{k-2} \right) \OPT(I)$.
\end{theorem}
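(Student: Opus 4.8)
The first thing I would record is the algebraic identity
\[
1 + \frac{2}{k-2} \;=\; \frac{k}{k-2} \;=\; \frac{1}{1-2/k} \;=\; \sum_{j \ge 0}\left(\frac{2}{k}\right)^j ,
\]
which is valid precisely when $2/k < 1$, i.e.\ when $k > 2$. This strongly suggests that the target bound should emerge from a charging argument organized into levels, where the amount of greedy cost charged at level $j+1$ is at most $\frac{2}{k}$ times the amount charged at level $j$; summing the resulting geometric series then gives exactly $\frac{k}{k-2}\,\OPT(I)$, and the hypothesis $k \ge 3$ is what guarantees convergence. So the plan is to find such a level-by-level charging with per-level contraction factor $2/k$.

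To set this up I would fix the final greedy assignment $\sigma$ and an optimal assignment $\pi$, and write $g_i = d(s_{\sigma(i)}, r_i)$ and $o_i = d(s_{\pi(i)}, r_i)$, so that $\Greedy_k(I) = \sum_i g_i$ and $\OPT(I) = \sum_i o_i$. I would classify each request $r_i$ by the state of $s_{\pi(i)}$ in greedy's schedule at the moment $r_i$ arrives. Call $r_i$ \emph{cheap} if $s_{\pi(i)}$ is still unfull when $r_i$ arrives: since $\Greedy_k$ assigns $r_i$ to the nearest unfull site, we immediately get $g_i \le d(s_{\pi(i)}, r_i) = o_i$, so the cheap requests supply the leading ($j=0$) term $\sum_{\text{cheap}} g_i \le \OPT(I)$. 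All the difficulty lies in the \emph{expensive} requests, for which $s_{\pi(i)}$ is already full when $r_i$ arrives and greedy is forced onto a strictly farther site, so that $g_i$ cannot be controlled by $o_i$ alone.

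For an expensive $r_i$, the full site $s_{\pi(i)} = s_j$ carries $k\,a_j$ greedy requests but only $a_j$ optimal requests, so it has a surplus of $(k-1)a_j$ greedy requests. I would bound $g_i$ using the triangle inequality through $s_j$ and through these surplus requests: each such request $t$ satisfies $d(s_j,t) = g_t$ and $d(s_j, r_i) = o_i$, giving $d(r_i,t) \le o_i + g_t$, after which I trace $t$ back to its own optimal site $s_{\pi(t)}$ to continue the chain to the next level. The resource augmentation is exactly what makes this work: because each full site offers $k\,a_j$ greedy copies against only $a_j$ optimal copies, a unit of charge can be spread over enough distinct requests that the charge surviving to the next level shrinks by the factor $2/k$, with the factor $2$ arising from the two triangle-inequality legs. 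Formally I would phrase this as augmenting paths in the symmetric difference of $\sigma$ and $\pi$, where over-full sites act as sources and under-full sites as sinks, and bound the detour cost of each path level by level.

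The crux, and the step I expect to be the main obstacle, is engineering the charging so that three things hold simultaneously: every greedy cost $g_i$ is accounted for exactly once, no optimal cost $o_i$ is charged beyond what the geometric budget permits, and the per-level contraction is provably $2/k$ rather than merely some $O(1/k)$. Pinning down the constant $2$ — and hence the exact threshold $k \ge 3$ for convergence — is where the triangle-inequality bookkeeping and the careful counting of surplus greedy requests at each full site will have to be carried out.
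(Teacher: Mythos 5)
Your high-level plan is the same as the paper's: both arguments hinge on the identity $1+\frac{2}{k-2}=\sum_{j\ge 0}(2/k)^j$, on a factor of $2$ arising from two triangle-inequality legs (one optimal edge plus one greedy edge), and on a factor of $1/k$ arising from spreading charge over the $k$ greedy requests packed into a full site. You have correctly located where the constant comes from. But the proposal stops exactly at the point where the proof actually has to be done --- you yourself flag the ``engineering of the charging'' as the main obstacle and leave it open --- and that engineering is the entire technical content of the paper. Concretely, three things are missing. First, well-foundedness: in your chain $d(r_i,t)\le o_i+g_t$ the quantity $g_t$ is itself a greedy cost still to be bounded, so the recursion must be organized so that it terminates at a site that was \emph{unfull when $r_i$ arrived} (only then does the greedy rule give $g_i\le$ the chain length). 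The paper achieves this by decomposing the response graph $E_{\OPT}\cup E_{\ON}$ into rooted \emph{response trees}, built in reverse chronological order of roots, whose leaves are guaranteed to be unfull at the arrival time of every request in the tree; the bound $g_j\le ld(r_j)\le W(r_j)$ is then proved by induction on subtree height. Your cheap/expensive dichotomy only checks fullness of $s_{\pi(i)}$ at level one and says nothing about the timing of sites deeper in the chain. Second, no double-charging: when you sum the level-$j$ charges over all requests, you must show each optimal edge is charged at level $j$ by at most one request; in the paper this is the uniqueness of the ancestor at a given adversary-edge depth in the tree, which is what lets the per-edge coefficients telescope into the geometric series.

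Third, and this is a place where your accounting as written would actually lose a constant: you propose to spread the charge over the $(k-1)a_j$ \emph{surplus} greedy requests at the full site $s_j$. Spreading over $k-1$ copies per unit of adversary capacity gives a contraction of $2/(k-1)$, a geometric sum of $\frac{k-1}{k-3}$, and a convergence threshold of $k\ge 4$ --- strictly weaker than the claimed bound. To get $2/k$ you must average over all $k$ greedy requests at the site (the paper's response trees give every non-leaf server site exactly $k$ online children, and the bound $\min_p[\,g_{b_p}+ld(r_{b_p})\,]\le \frac{2}{k}\sum_p W(r_{b_p})$ uses min $\le$ average over all $k$ of them). So the proposal is a correct identification of the strategy, but it is not yet a proof, and one of its stated quantitative choices would need to be repaired to reach the stated constant.
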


Further we show that this bound is essentially tight by giving an instance where
this lower bound is obtained in the  limit.  So one possible interpretation of this
result is that $\Greedy$ should perform reasonably well (have bounded relative error) on 
instances where tripling the capacity of the garages wouldn't change the optimal cost
by more than a constant factor (so intuitively the load on the parking system is not too high).  It wouldn't be totally unreasonable to argue that this result 
provides a more convincing explanation of when $\Greedy$ should perform reasonably, and
why it performs reasonably in these instances, than do prior results. For example,
this result guarantees bounded competitiveness, and even competitiveness approaching one
as the resource augmentation increases. 
In fairness, let us acknowledge the best counterargument,
which is probably that a factor of three resource augmentation
is significant.

Not surprisingly, our proof of Theorem \ref{thm:comp. general k} builds on the foundation established in \cite{onlinetransportwa}.
However, it is important to note that if one naively applies the analysis of $\Greedy$ in \cite{onlinetransportwa} with $k \geq 3$ (instead of $k=2$), then one
just obtains $\log_k C$ competitiveness (instead of the original $\log_2 C$ competitiveness result). 
Thus we had to develop a new method to bound certain costs for
the $\Greedy$ algorithm. 
The main technical innovation was the introduction of what we call the the weighted tree cost.
Informally, the weighted tree cost bounds certain costs for
the $\Greedy$ algorithm by a particular weighted sum of the cost of certain edges in the optimal solution (instead of
directly bounding these costs by the entirety of the optimal cost). 

\section{Algorithm Analysis}

We begin with the simplifying assumption that $c_i = k$ and $a_i = 1$ for all $1 \leq i \leq n$. We assume the adversary services $r_i$ with $s_i$, and that the online algorithm services $r_i$ with $s_{\sigma(i)}$. By convention, we represent adversary edges by listing the request first (e.g. $(r_i, s_i)$) and online edges by listing the server first (e.g. $(s_{\sigma(i)}, r_i)$).

\subsection{Defining the Response Graph and Response Trees}

We start as in \cite{onlinetransportwa}
by defining the response graph,
noting that it is almost acyclic, 
and then decomposing  its edges into what we call response trees.
An example of a response graph and one possible decomposition into response trees can be seen in \Cref{fig:resp_graph} and \Cref{fig:sub5}.

\begin{definition}[Response Graph]
    Let $E_{\OPT} = \bigcup_{i=1}^n (r_i, s_i)$ be the set of all adversary edges,  $E_{ON} = \bigcup_{i=1}^n (s_{\sigma(i)}, r_i)$ be the set of all online edges, and  $E = E_{\OPT} \ \cup \ E_{ON}$. Then the \textit{response graph} is $\mathcal{G} = (S \cup R, E)$, where each edge has a weight that is the distance in the underlying metric space $\mathcal M$ between the endpoints of $e$. 
\end{definition}

\begin{figure}
\centering
\begin{subfigure}[t]{.4\textwidth}
  \centering
  \includegraphics[scale=.4]{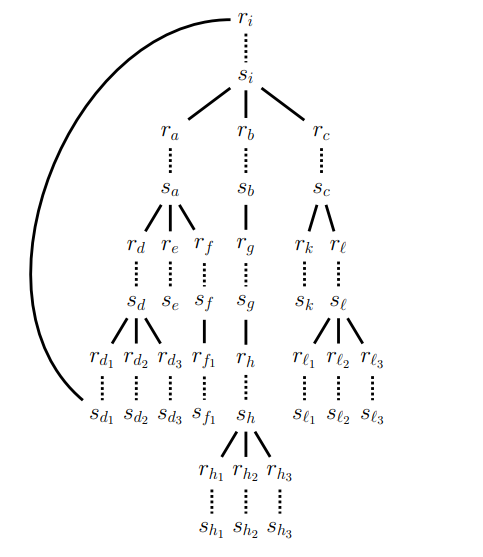}
    \caption{An Example Connected Component of the Response Graph for $k=3$.}
    \label{fig:resp_graph}
\end{subfigure}%
\quad
\begin{subfigure}[t]{.4\textwidth}
  \centering
  \includegraphics[scale=.43]{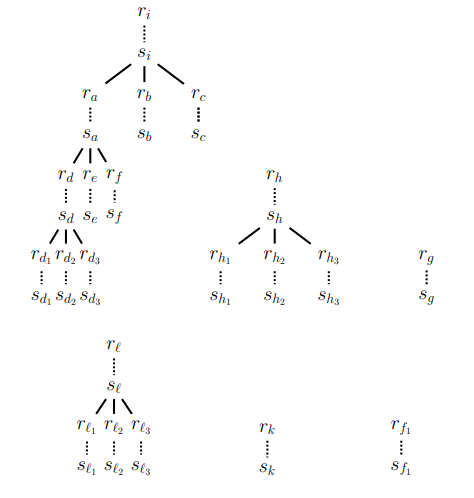}
    \caption{One possible tree decomposition (roots are highest node)}
    \label{fig:sub5}
\end{subfigure}%
\end{figure}

\begin{lemma}\label{lemma:2.4}\cite{onlinetransportwa}
Assume that request $r_i$ is in a cycle in $\mathcal{G}$. Then the connected component of $\mathcal{G} - (s_{\sigma(i)}, r_i)$ that contains $r_i$ is a tree.
\end{lemma}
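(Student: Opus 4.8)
The plan is to show that every connected component of $\mathcal{G}$ carries at most one cycle, after which the statement falls out almost immediately. First I would record the two degree facts that the response graph hands us for free. Since the adversary has $a_i=1$, the adversary edges $\{(r_i,s_i)\}$ form a matching of $S\cup R$ in which every request and every (used) server has exactly one adversary edge; and since each request $r_i$ contributes precisely one adversary edge $(r_i,s_i)$ and one online edge $(s_{\sigma(i)},r_i)$, every request vertex has degree exactly $2$. Because a matching edge never leaves its own component, inside any fixed component $K$ the adversary matching puts the request set $R_K$ and the server set $S_K$ in bijection, so $|R_K| = |S_K|$.

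Next I would count edges and vertices of $K$. Every edge of $\mathcal{G}$ is incident to a request, and each request of $R_K$ owns exactly two edges (its adversary edge and its online edge), both of whose endpoints lie in $K$; hence $|E(K)| = 2|R_K|$, while $|V(K)| = |R_K| + |S_K| = 2|R_K|$. The cyclomatic number of $K$ is therefore $|E(K)| - |V(K)| + 1 = 1$, so $K$ contains exactly one cycle. (In the general case, where a server need not be matched by the adversary, one still gets $|R_K|\le |S_K|$ and hence cyclomatic number at most $1$, which is all that is used below.)

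To finish, observe that because $r_i$ lies on a cycle, its component $K$ has cyclomatic number $1$ and $r_i$ sits on the unique cycle of $K$. Since $\deg(r_i)=2$, that cycle must traverse both edges incident to $r_i$, and in particular it uses the online edge $e=(s_{\sigma(i)},r_i)$. Thus $e$ is a cycle edge and therefore not a bridge: deleting it leaves $K$ connected while dropping the cyclomatic number to $0$. A connected graph with cyclomatic number $0$ is a tree, and this tree is exactly the component of $\mathcal{G}-(s_{\sigma(i)},r_i)$ that contains $r_i$.

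The one place that needs care — and the closest thing to an obstacle — is the per-component balance $|R_K|=|S_K|$: everything hinges on the adversary assignment being an injective (capacity-one) map, which is precisely what caps the number of cycles at one. I would also flag the degenerate case $s_{\sigma(i)}=s_i$, where the online and adversary edges of $r_i$ are parallel and form a length-two cycle; reading $\mathcal{G}$ as a multigraph, the identical counting applies. Finally, it is worth emphasizing that no property of $\Greedy$ enters the argument: the claim holds for any online assignment $\sigma$, so the entire content is the bipartite counting above.
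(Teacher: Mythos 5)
Your proof is correct, but it takes a genuinely different route from the paper's. The paper argues locally: it performs a BFS layering of the component of $\mathcal{G} - (s_{\sigma(i)}, r_i)$ containing $r_i$, observes that levels alternate between requests and servers, and derives a contradiction from a highest-level vertex of a hypothetical cycle, which would need either two adversary edges at one server (impossible since $a_i=1$) or two online edges at one request (impossible since each request is assigned once). You instead argue globally: each request has degree exactly $2$ and the adversary assignment injects $R_K$ into $S_K$, so every component $K$ satisfies $|E(K)| - |V(K)| + 1 = |R_K| - |S_K| + 1 \le 1$ and is at most unicyclic; since $r_i$ lies on the (unique) cycle and has degree $2$, the online edge $(s_{\sigma(i)}, r_i)$ is a cycle edge, so deleting it leaves a connected graph of cyclomatic number $0$, i.e.\ a tree. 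Your counting argument is shorter, directly yields the "almost acyclic'' structure of the whole response graph rather than just the one component at hand, and makes transparent exactly which hypotheses are used (capacity-one adversary servers, one online server per request, and nothing about $\Greedy$); you also correctly flag the degenerate parallel-edge case $s_{\sigma(i)} = s_i$, which the multigraph reading handles. The paper's layering argument, by contrast, needs no Euler-type counting and identifies concretely where a putative second cycle would violate the assignment constraints. Both are valid; yours is arguably the cleaner derivation of the same fact.
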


\begin{definition}[Tree Decomposition]
    We define a tree decomposition of the response graph $\mathcal{G}$ to be 
a collection of response trees where:
\begin{itemize}
    \item Each response  tree $\mathcal T$ is a rooted tree that is rooted at some request $r_i$.
    \item Each response tree $\mathcal T$ is a subgraph of $\mathcal G$. 
    \item Every edge in $\mathcal{G}$ is
either contained in a unique response tree, or is the online edge $(s_{\sigma(i)}, r_i)$
incident to the root $r_i$ of some response tree  $\mathcal T$, but not both (so an online
edge incident to a root of a response tree is not in any response tree).
\end{itemize}
\end{definition}

\cite{onlinetransportwa} then shows how to decompose the response graph 
into response trees, where each response tree $ \mathcal{T}$  has the following additional properties:
\begin{itemize}
    \item  For each request $r_j \in \mathcal{T}$, $r_j$ has one  child, namely $s_j$.
    \item Each leaf  in $\mathcal{T}$ is a server site $s_j$ with parent $r_j$.
    
    \item Each nonleaf server site $s_i$ in $\mathcal T$ has $k$ incident online edges in $\mathcal T$, which are the children of $s_i$ in
    $\mathcal T$.
    \item For each request $r_j \in \mathcal{T}$ and for each leaf $s_q \in \mathcal{T}$ it is the case that the algorithm $\Greedy$ 
    had an unused server available at $s_q$ when request $r_j$ arrived. 
\end{itemize}
Intuitively, \cite{onlinetransportwa} accomplishes this by iteratively breaking
up each connected component $\mathcal{C}$ as follows.
Let $r_i$ be the most recent request in $\mathcal{C}$. First the
online edge $(s_{\sigma(i)}, r_i)$ is deleted.  Let $\mathcal{ C}'$ be the resulting connected component containing $r_i$
(note $\mathcal{C}'$ is a tree by \Cref{lemma:2.4}). 
A response tree rooted at $r_i$ is then created by including all vertices reachable from $r_i$ in $\mathcal{C}'$ by a
path that does not contain an unfull server site as an internal server site on the path (in this context, unfull means that at the time of $r_i$, the greedy algorithm had not used all of the
servers at that server site). Or alternatively, the leaves of $\mathcal T$ are unfull server sites reachable from $r_i$ in $\mathcal{C}'$ 
without passing through another unfull server site. The edges and request vertices of $\mathcal T$ are then removed from $\mathcal{C}$. We now fix a particular such
decomposition of $\mathcal G$ into response trees for the rest of the paper.

Finally, we give the following useful definitions related to response trees.

\begin{definition}[Adversary Cost of $\mathcal{T}$]
For a response tree $\mathcal T$, let $\OPT(\mathcal{T})$ be defined as the sum of the costs of all adversary edges in $\mathcal{T}$. Thus, $\OPT(\mathcal{T}) = \sum_{(r_j, s_j) \in \mathcal{T}} d(r_j, s_j)$.
    
\end{definition}

\begin{definition}[Online Cost of $\mathcal{T}$] 
For a response tree $\mathcal T$, 
let $\ON(\mathcal{T})$ be defined as the sum of the costs of all online edges in $\mathcal{T} \cup \{ (s_{\sigma(i)}, r_i) \}$. Thus, $\ON(\mathcal{T}) = \sum_{r_j \in \mathcal{T}} d(s_{\sigma(j)}, r_j)$.
\end{definition}

\begin{definition}[Leaf Distance in $\mathcal{T}$]
Let $\mathcal T$ be a response tree rooted at a request $r_i$. 
For all vertices $x \in \mathcal{T}$, let $\mathcal{T}(x)$ be the subtree of $\mathcal{T}$ rooted at $x$. Define $ld(x)$, the leaf distance of $x$, as the minimum distance in $\mathcal T$ (not in $\mathcal M$) from $x$ to a leaf of $\mathcal{T}(x)$. Further, define $ld(s_{\sigma(i)})$ 
to be $ d(s_{\sigma(i)}, r_i) + ld(r_i)$.
\end{definition}

\begin{definition}[Weighted Tree Cost]
Let $\mathcal T$ be a response tree rooted at a request $r_i$. 
Let $r_j$ be a request in $\mathcal T$, with child $s_j$ 
and grandchildren $r_{\delta(1)}, \ldots r_{\delta(k)}$. 
We then recursively define the weighted tree cost of $r_j$ to be
$$\displaystyle W(r_j) = d(r_j, s_j) + \frac{2}{k}\left(\sum_{h=1}^k W(r_{\delta(h)})\right)
$$
If $s_j$ is a leaf, then $W(r_j) = d(r_j, s_j)$.
\end{definition}

\subsection{Analysis of $\Greedy$}

In \Cref{lm:edge-bound} and \Cref{lm:ld-bound}, we show that the Weighted Tree Cost provides a useful upper bound on leaf distances of $\mathcal{T}$ and by extension online edges in $\mathcal{T}$. In \Cref{lm:bounded-tree}, we use this result to directly bound $\ON(\mathcal{T})$ in terms of $\OPT(\mathcal{T})$. Finally, in \Cref{lm:ai=1 case}, we extend this bound on trees $\mathcal{T}$ to the entire response graph $\mathcal{G}$, and finally prove the main result, \Cref{thm:comp. general k}.

\begin{lemma}\label{lm:edge-bound} 
Let $\mathcal T$ be an arbitrary response tree. 
For each request $r_j \in \mathcal{T}$, $d(s_{\sigma(j)}, r_j) \leq ld(r_j)$.
\end{lemma}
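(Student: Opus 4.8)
The plan is to chain together three facts: the defining greedy property that $\Greedy$ assigns each request to its nearest \emph{unfull} server, the fourth structural property of response trees (that every leaf of $\mathcal{T}$ had an unused server available at the arrival time of every request in $\mathcal{T}$), and the triangle inequality in the underlying metric $\mathcal{M}$. The guiding intuition is that the leaf realizing $ld(r_j)$ was a legal alternative assignment for $r_j$ at the moment $r_j$ arrived, so greedy's actual choice can cost no more than the direct distance to that leaf, which in turn is dominated by the tree-path distance to it.

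Concretely, I would first let $s_q$ be a leaf of the subtree $\mathcal{T}(r_j)$ that achieves the minimum in the definition of $ld(r_j)$, so that the sum of edge weights along the $\mathcal{T}$-path from $r_j$ to $s_q$ equals exactly $ld(r_j)$. Since $s_q$ is a leaf of $\mathcal{T}(r_j)$ it is also a leaf of $\mathcal{T}$, so the fourth response-tree property applies and guarantees that $\Greedy$ had an unused server at $s_q$ when $r_j$ arrived; hence $s_q$ was an unfull server site at that time and was a feasible target for $r_j$. Because $\Greedy$ assigns $r_j$ to the nearest unfull server, its actual assignment satisfies $d(s_{\sigma(j)}, r_j) \leq d(s_q, r_j)$. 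Finally, applying the triangle inequality along the sequence of vertices on the $\mathcal{T}$-path from $r_j$ to $s_q$ (each tree edge having weight equal to the metric distance between its endpoints) yields $d(s_q, r_j) \leq ld(r_j)$. Composing these two inequalities gives $d(s_{\sigma(j)}, r_j) \leq ld(r_j)$, as desired.

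The step I expect to carry the real weight of the argument is the invocation of the availability property: everything hinges on the fact that the leaf $s_q$ minimizing the tree distance was genuinely an option for greedy at time $r_j$, and this is exactly what the fourth bullet in the response-tree construction was designed to furnish. One point to be careful about is that $ld(r_j)$ is defined as a distance \emph{in} $\mathcal{T}$ rather than in $\mathcal{M}$, so I must explicitly route through the triangle inequality to pass from the tree-path length to the direct metric distance $d(s_q, r_j)$ that greedy actually compares against; this conversion is routine once the path is identified, and it is only valid in one direction (tree distance upper-bounds metric distance), which is precisely the direction needed here. The greedy comparison and the triangle inequality are both elementary, so no delicate estimates arise.
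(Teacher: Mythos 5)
Your proposal is correct and follows exactly the paper's own argument: pick the leaf $s_q$ of $\mathcal{T}(r_j)$ realizing $ld(r_j)$, use the triangle inequality to get $d(s_q, r_j) \le ld(r_j)$, and use the availability of an unused server at $s_q$ together with the greedy rule to get $d(s_{\sigma(j)}, r_j) \le d(s_q, r_j)$. The only difference is that you spell out the justifications (especially the tree-distance versus metric-distance point) in more detail than the paper does.
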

\begin{proof} 
Let $s_q$ be a leaf in the subtree of $\mathcal T$ rooted at $r_j$ that is closest to $r_j$ in $\mathcal T$. 
 By the triangle inequality, we know that $d(s_q, r_j) \leq ld(r_j)$. Further we know that $\Greedy$ had an unused 
 server at $s_q$ at the time that $r_j$ arrived. Thus by the definition of $\Greedy$, it must be the case that  $d(s_{\sigma(j)}, r_j) \leq d(s_q, r_j) $. Thus by transitivity we can conclude that $d(s_{\sigma(j)}, r_j) \leq ld(r_j)$.
\end{proof}

\begin{lemma}\label{lm:ld-bound} 
Let $\mathcal T$ be an arbitrary response tree. 
For each request $r_j \in \mathcal{T}$, $ld(r_j) \leq W(r_j)$.
\end{lemma}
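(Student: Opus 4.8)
The plan is to prove the bound by induction on the structure of the response tree, working from the leaves upward (equivalently, induction on the height of the subtree $\mathcal{T}(r_j)$). The base case is when the child $s_j$ of $r_j$ is a leaf: here the unique leaf of $\mathcal{T}(r_j)$ reachable from $r_j$ is $s_j$ itself, so $ld(r_j) = d(r_j, s_j)$, which equals $W(r_j)$ by the definition of the weighted tree cost. Hence $ld(r_j) \le W(r_j)$ holds with equality.

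For the inductive step, suppose $s_j$ is not a leaf, so it has $k$ children $r_{\delta(1)}, \ldots, r_{\delta(k)}$ in $\mathcal{T}$, and assume the bound holds for every grandchild. First I would record the recursive expression for the leaf distance. Since $r_j$'s only child is $s_j$ and $s_j$ is internal, every leaf of $\mathcal{T}(r_j)$ lies in some grandchild subtree $\mathcal{T}(r_{\delta(h)})$, and the nearest such leaf is reached along the path through $s_j$ and the best grandchild, giving
\[
ld(r_j) = d(r_j, s_j) + \min_{1 \le h \le k} \bigl( d(s_j, r_{\delta(h)}) + ld(r_{\delta(h)}) \bigr).
\]

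The key observation, which drives the whole argument, is that the tree edge $(s_j, r_{\delta(h)})$ is an online edge, and since the parent of a request in $\mathcal{T}$ is exactly its online server, we have $s_j = s_{\sigma(\delta(h))}$, whence $d(s_j, r_{\delta(h)}) = d(s_{\sigma(\delta(h))}, r_{\delta(h)})$. Applying \Cref{lm:edge-bound} to the request $r_{\delta(h)}$ then yields $d(s_j, r_{\delta(h)}) \le ld(r_{\delta(h)})$, so each summand satisfies $d(s_j, r_{\delta(h)}) + ld(r_{\delta(h)}) \le 2\, ld(r_{\delta(h)}) \le 2\, W(r_{\delta(h)})$ by the inductive hypothesis. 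Finally I would bound the minimum over the $k$ grandchildren by their average, obtaining
\[
\min_{1 \le h \le k} \bigl( d(s_j, r_{\delta(h)}) + ld(r_{\delta(h)}) \bigr) \le \frac{1}{k} \sum_{h=1}^k 2\, W(r_{\delta(h)}) = \frac{2}{k} \sum_{h=1}^k W(r_{\delta(h)}),
\]
and substituting this into the expression for $ld(r_j)$ reproduces exactly the defining recursion for $W(r_j)$, completing the induction.

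I expect the main obstacle to be the opening of the inductive step, namely pinning down the correct recursion for $ld(r_j)$ and, in particular, recognizing that the tree edge $(s_j, r_{\delta(h)})$ is precisely the online edge $(s_{\sigma(\delta(h))}, r_{\delta(h)})$, so that \Cref{lm:edge-bound} is applicable. Once that identification is in place, the factor-of-two bound on each grandchild's contribution and the min-is-at-most-average step fit together exactly to match the coefficient $2/k$ in the definition of the weighted tree cost, so there is essentially no slack left to manage.
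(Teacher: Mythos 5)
Your proposal is correct and follows essentially the same argument as the paper: induction on the height of $\mathcal{T}(r_j)$, using \Cref{lm:edge-bound} to replace $d(s_j, r_{\delta(h)})$ by $ld(r_{\delta(h)})$ (via the identification $s_j = s_{\sigma(\delta(h))}$), then the inductive hypothesis and the min-at-most-average step to recover the $\frac{2}{k}$ coefficient in $W(r_j)$. The only cosmetic difference is the order in which you apply the inductive hypothesis and the averaging bound, which does not affect the argument.
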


\begin{proof}
We proceed by induction on the height $h$ of the induced tree $\mathcal{T}(r_j)$ (the subtree of $\mathcal{T}$ rooted at $r_j$). If $h=1$, the induced tree contains one request, $r_j$, which the adversary services using $s_j$. Thus, $ld(r_j) = d(r_j, s_j) = W(r_j)$.\par
Now suppose $h > 1$. Then, $s_j$ has $k$ children: $r_{b_1}, r_{b_2}, \dots, r_{b_k}$. This gives
\begin{align*}
ld(r_j) &= d(r_j,s_j) + \min_{p = 1 \dots k} \left[d(s_j, r_{b_p}) + ld(r_{b_p})\right] \\
&\leq d(r_j,s_j) + \min_{p = 1 \dots k} \left[ld(r_{b_p}) + ld(r_{b_p})\right] \\
&= d(r_j,s_j) + 2\left(\min_{p = 1 \dots k} \left[ld(r_{b_p})\right]\right)\\
&\leq d(r_j,s_j) + 2\left(\min_{p = 1 \dots k} \left[W(r_{b_p})\right]\right)\\
&\leq d(r_j,s_j) + \frac{2}{k}\left(\sum_{p=1}^k W(r_{b_p})\right)\\
&= W(r_j)
\end{align*} 
The first inequality follows from \Cref{lm:edge-bound}. The second inequality follows from induction.
The third inequality follows from the fact that the average has to be larger than the minimum.
The last equality follows from the definition of $W(r_j)$. 
\end{proof}

Lemma \ref{lm:ld-bound} is the main technical extension relative to \cite{onlinetransportwa}. In 
\cite{onlinetransportwa} the term $ld(r_j)$ is merely upper bounded by the optimal cost. If we used that
upper bound on $ld(r_j)$ in our analysis, our upper bound on the competitive ratio  would not be $O(1)$.

\begin{lemma}\label{lm:recursive-formula} 
Let $\mathcal T$ be an arbitrary response tree rooted at $r_i$ of height $h$. Let $D_{\ell}$ be the collection of adversary edges $(r_j, s_j)$ in $\mathcal T$ where the path from $r_i$ to $s_j$ in $\mathcal T$ passes
through $l$ adversary edges. Then
$$\displaystyle W(r_i) = \sum_{\ell=1}^h \left(\frac{2}{k}\right)^{l-1} \sum_{(r_j, s_j) \in D_{\ell}} d(r_j, s_j)
$$
\end{lemma}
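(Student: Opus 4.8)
The plan is to induct on the height $h$ of $\mathcal T$, unfolding the recursive definition of $W(r_i)$ one layer at a time while tracking how the \emph{adversary level} $\ell$ of an edge shifts when we pass from a subtree to the whole tree. This is a pure identity about $W$, so unlike \Cref{lm:ld-bound} no inequalities, no minimum-over-children, and no appeal to $\Greedy$ are needed.

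First I would dispatch the base case $h=1$. Here $s_i$ is a leaf, so $W(r_i)=d(r_i,s_i)$ by definition, and the unique adversary edge $(r_i,s_i)$ lies in $D_1$ (the path $r_i\to s_i$ crosses exactly one adversary edge) while $D_\ell=\emptyset$ for $\ell>1$; both sides equal $d(r_i,s_i)$. For the inductive step $h>1$, I would invoke the structural property that the nonleaf server $s_i$ has exactly $k$ children $r_{\delta(1)},\dots,r_{\delta(k)}$ and write $W(r_i)=d(r_i,s_i)+\frac{2}{k}\sum_{h'=1}^k W(r_{\delta(h')})$. Each subtree $\mathcal T(r_{\delta(h')})$ has height at most $h-1$, so the induction hypothesis expands $W(r_{\delta(h')})$ as a weighted sum over its own level sets $D^{(h')}_\ell$, where levels are measured from the subtree root $r_{\delta(h')}$.

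The crucial bookkeeping observation, which I would isolate as a one-line claim, is that an adversary edge at level $\ell$ inside $\mathcal T(r_{\delta(h')})$ sits at level $\ell+1$ inside $\mathcal T$, since the path from $r_i$ to its server picks up the single extra adversary edge $(r_i,s_i)$ at the top. Hence $D_1=\{(r_i,s_i)\}$, and for every $\ell\ge 2$ the set $D_\ell$ is the disjoint union $\bigsqcup_{h'=1}^k D^{(h')}_{\ell-1}$ over the vertex-disjoint subtrees. Substituting the hypotheses, interchanging the sum over $h'$ with the sum over levels, and collapsing $\sum_{h'}\sum_{(r_j,s_j)\in D^{(h')}_\ell} d(r_j,s_j)$ into $\sum_{(r_j,s_j)\in D_{\ell+1}} d(r_j,s_j)$, the factor $\frac{2}{k}$ turns $(2/k)^{\ell-1}$ into $(2/k)^{\ell}$; after reindexing $\ell\mapsto\ell+1$ this yields exactly the level-$2$ through level-$h$ terms, while the leading $d(r_i,s_i)$ supplies the $\ell=1$ term, recovering $W(r_i)=\sum_{\ell=1}^h (2/k)^{\ell-1}\sum_{(r_j,s_j)\in D_\ell} d(r_j,s_j)$.

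The main obstacle is organizational rather than computational: getting the level shift and the disjoint decomposition of the $D_\ell$ precisely right, and handling subtrees of differing heights uniformly. The height issue is harmless, since $D^{(h')}_\ell=\emptyset$ once $\ell$ exceeds the height of $\mathcal T(r_{\delta(h')})$, so extending each inner sum up to $h-1$ merely adds vanishing terms. Once the claim about the level shift is stated cleanly, the remaining algebra is a routine interchange of summations and a reindexing.
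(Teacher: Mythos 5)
Your proposal is correct and follows essentially the same route as the paper: induction on the height, expanding the recursive definition of $W(r_i)$, decomposing each $D_\ell$ (for $\ell\ge 2$) into the disjoint contributions of the $k$ subtrees with a level shift of one, and then interchanging and reindexing the sums. The only cosmetic difference is that you measure levels from each subtree's root and shift afterward, while the paper indexes the subtree sums directly by the full-tree level sets $D_{l+1}^{a_p}$; you also explicitly note that subtrees of smaller height contribute empty level sets, a point the paper leaves implicit.
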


\begin{proof}
We prove this by induction on $h$. For $h = 1$, we simply have $W(r_i) = d(r_i, s_i)$. For $h > 1$, suppose $s_i$ has $k$ incident online edges $(s_i, r_{a_1}), (s_i, r_{a_2})$, \dots, $(s_i, r_{a_k})$. Then $D_2 = \{(r_{a_1}, s_{a_1}), (r_{a_2}, s_{a_2}), \dots, (r_{a_k}, s_{a_k})\}$. Further, for all $2 \leq \ell \leq h$, define $D_{\ell}^{a_1} \subseteq D_{\ell}$ s.t. $D_{\ell}^{a_1}$ contains all adversary edges in $D_{\ell}$ within the subtree $\mathcal{T}(r_{a_1})$. Define $D_{\ell}^{a_2}, D_{\ell}^{a_3}, \dots, D_{\ell}^{a_k}$ similarly. Thus, $D_{\ell} = \bigcup_{p = 1}^k D_{\ell}^{a_p}$ where $D_{\ell}^{a_1}, D_{\ell}^{a_2}, \dots, D_{\ell}^{a_k}$ are pairwise disjoint. Then we have
\begin{align*}
        W(r_i)
        & = d(r_i, s_i) + \frac{2}{k}\left(\sum_{p=1}^k W(r_{a_p})\right)\\
        & = d(r_i, s_i) + \frac{2}{k}\left(\sum_{p=1}^k \left(\sum_{l=1}^{h-1} \left(\frac{2}{k}\right)^{l-1} \sum_{(r_j, s_j) \in D_{l+1}^{a_p}} d(r_j, s_j)\right)\right) \\
        & = d(r_i, s_i) + \frac{2}{k}\left(\sum_{l=1}^{h-1} \left(\frac{2}{k}\right)^{l-1} \left(\sum_{p=1}^k \sum_{(r_j, s_j) \in D_{l+1}^{a_p}} d(r_j, s_j)\right)\right) \\
        & = d(r_i, s_i) + \frac{2}{k}\left(\sum_{l=1}^{h-1} \left(\frac{2}{k}\right)^{l-1} \sum_{(r_j, s_j) \in D_{l+1}} d(r_j, s_j)\right) \\
        & = d(r_i, s_i) + \sum_{l=2}^h
        \left(\frac{2}{k}\right)^{l-1} \sum_{(r_j, s_j) \in D_{\ell}} d(r_j, s_j) \\
        & = \sum_{l=1}^h \left(\frac{2}{k}\right)^{l-1} \sum_{(r_j, s_j) \in D_{\ell}} d(r_j, s_j) \\
\end{align*}
\end{proof}

\begin{lemma}\label{lemma 5}\label{lm:bounded-tree}
Let $\mathcal T$ be an arbitrary response tree.
Then $\ON(\mathcal{T}) \leq \left(1 + \frac{2}{k-2}\right)\OPT(\mathcal{T})$.
\end{lemma}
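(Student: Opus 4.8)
The plan is to chain the two bounds already established for individual online edges and then control the resulting sum of weighted tree costs by a double-counting argument. First I would observe that for every request $r_j \in \mathcal{T}$, combining \Cref{lm:edge-bound} and \Cref{lm:ld-bound} gives $d(s_{\sigma(j)}, r_j) \le ld(r_j) \le W(r_j)$. Summing over all requests in $\mathcal{T}$ and recalling the definition of $\ON(\mathcal{T})$ yields $\ON(\mathcal{T}) \le \sum_{r_j \in \mathcal{T}} W(r_j)$. Since $1 + \frac{2}{k-2} = \frac{k}{k-2}$, it therefore suffices to prove the single inequality $\sum_{r_j \in \mathcal{T}} W(r_j) \le \frac{k}{k-2}\,\OPT(\mathcal{T})$.

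To bound this sum I would expand each $W(r_j)$ using \Cref{lm:recursive-formula} applied to the subtree $\mathcal{T}(r_j)$, and then exchange the order of summation so as to collect, for each fixed adversary edge $(r_a, s_a) \in \mathcal{T}$, the total coefficient with which $d(r_a, s_a)$ appears across all of the $W(r_j)$. The key structural observation is that $W(r_j)$ contains the term $d(r_a, s_a)$ precisely when $r_a$ lies in the subtree rooted at $r_j$, i.e. when $r_j$ is an ancestor of (or equal to) $r_a$. If $(r_a, s_a)$ sits at adversary depth $m$ relative to the root $r_i$ (meaning $(r_a, s_a) \in D_m$), then its ancestors-or-self are exactly one request at each adversary depth $t = 1, \dots, m$, and within the subtree rooted at that depth-$t$ ancestor the edge $(r_a, s_a)$ sits at relative depth $m - t + 1$. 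Hence by \Cref{lm:recursive-formula} its coefficient in that ancestor's weighted tree cost is $\left(\frac{2}{k}\right)^{m-t}$.

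Summing these coefficients over the $m$ ancestors gives a truncated geometric series $\sum_{t=1}^{m} \left(\frac{2}{k}\right)^{m-t} = \sum_{u=0}^{m-1}\left(\frac{2}{k}\right)^{u}$, which, since $k \ge 3$ forces $\frac{2}{k} < 1$, is bounded above by $\frac{1}{1 - 2/k} = \frac{k}{k-2}$ independently of $m$. Thus every adversary edge contributes at most $\frac{k}{k-2}\,d(r_a, s_a)$ to $\sum_{r_j \in \mathcal{T}} W(r_j)$, and summing over all adversary edges gives $\sum_{r_j \in \mathcal{T}} W(r_j) \le \frac{k}{k-2}\,\OPT(\mathcal{T})$, which combined with the first paragraph completes the proof.

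I expect the main obstacle to be the bookkeeping in the second step: correctly identifying which requests' weighted tree costs see a given adversary edge, and pinning down the exponent $m - t$ of its coefficient, so that the per-edge contribution collapses into a geometric series summing to exactly the target factor $\frac{k}{k-2}$. The geometric-series estimate itself is routine once $k \ge 3$ guarantees a common ratio below one; the real care lies in justifying the interchange of summation and in the depth bookkeeping that matches the recursion of \Cref{lm:recursive-formula}.
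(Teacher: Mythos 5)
Your proposal is correct and follows essentially the same route as the paper's proof: chain \Cref{lm:edge-bound} and \Cref{lm:ld-bound} to get $\ON(\mathcal{T}) \le \sum_{r_j} W(r_j)$, then use \Cref{lm:recursive-formula} to count, for each adversary edge, the geometric series of coefficients contributed by its ancestors, bounded by $\frac{k}{k-2}$. Your depth bookkeeping (the exponent $m-t$ over the $m$ ancestors) is in fact slightly more explicit than the paper's, which simply notes the ancestor at each adversary distance $b$ is unique.
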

\begin{proof}
Using \Cref{lm:edge-bound} and \Cref{lm:ld-bound},  we can bound $\ON(\mathcal{T})$ as follows:
    
    \[ \ON(\mathcal{T}) = \sum_{r_j \in \mathcal{T}} d(s_{\sigma(j)}, r_j) \leq \sum_{r_j \in \mathcal{T}} ld(r_j) \leq \sum_{r_j \in \mathcal{T}} W(r_j) \]

Note by applying \Cref{lm:recursive-formula}  one can view $\sum_{r_j \in \mathcal{T}} W(r_j)$ as a linear combination of costs of adversary edges. 
Consider an arbitrary adversary edge $(r_q, s_q) \in \mathcal{T}$. Note again by \Cref{lm:recursive-formula} that $d(r_q, s_q)$ will be included in $W(r_j)$ with coefficient $\left(\frac{2}{k}\right)^{b-1}$ only when the following two conditions hold: $r_j$ is an ancestor of $r_q$, and the simple path from $r_j$ to $s_q$ in $\mathcal T$ passes through $b$ adversary edges. Further, clearly an ancestor $r_j$ which satisfies these conditions is unique. Thus the coefficient associated with the cost of $(r_q, s_q)$ in  $\sum_{r_j \in \mathcal{T}} W(r_j)$ is at most
$1 + \left(\frac{2}{k}\right) + \left(\frac{2}{k}\right)^2 + \dots = \frac{1}{1-\frac{2}{k}} = \frac{k}{k-2} = 1 + \frac{2}{k-2}$. Thus, we have 
\[  \sum_{r_j \in \mathcal{T}} W(r_j) \leq \left(1 + \frac{2}{k-2}\right) \sum_{(r_q, s_q) \in \mathcal{T}} d(r_q, s_q) = \left(1 + \frac{2}{k-2}\right) \OPT(\mathcal{T}) \]

\end{proof}

\begin{lemma}\label{lm:ai=1 case}
 $\Greedy_k(I) \leq \left( 1 + \frac{2}{k-2} \right) \OPT(I)$  under the assumption that each server site $s_i$ has $k \geq 3$ online servers and one adversary server.
\end{lemma}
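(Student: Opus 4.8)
The plan is to lift the per-tree bound of \Cref{lm:bounded-tree} to the whole instance by summing over every response tree in the fixed decomposition of $\mathcal{G}$, and then to check that these two sums reassemble exactly into $\Greedy_k(I)$ and $\OPT(I)$. So the essential content is a bookkeeping argument: the decomposition partitions the vertices and edges of $\mathcal{G}$ in a way that makes the sums telescope cleanly.

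First I would argue that the response trees partition the request vertices of $\mathcal{G}$. The construction forms one tree at a time and then deletes that tree's edges and request vertices from $\mathcal{G}$, continuing until $\mathcal{G}$ has no edges left; hence each request $r_j$ lies in exactly one response tree. Since $\ON(\mathcal{T}) = \sum_{r_j \in \mathcal{T}} d(s_{\sigma(j)}, r_j)$ ranges over the requests of $\mathcal{T}$ and each request contributes its single online edge, summing over all trees yields
\[
\sum_{\mathcal{T}} \ON(\mathcal{T}) = \sum_{j=1}^{n} d(s_{\sigma(j)}, r_j) = \Greedy_k(I).
\]
The one point that needs care here is the online edge incident to a root $r_i$, which by the definition of the decomposition belongs to no response tree. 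It is nonetheless counted exactly once, because the root $r_i$ is itself a request of its own tree, so the term $d(s_{\sigma(i)}, r_i)$ appears in that tree's $\ON(\mathcal{T})$; this is precisely the purpose of the extra edge $\{(s_{\sigma(i)}, r_i)\}$ in the definition of $\ON(\mathcal{T})$. Thus no online edge is omitted and none is double counted.

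Next I would observe that the adversary edges are partitioned analogously: within each tree every request $r_j$ has exactly one child, namely its adversary edge $(r_j, s_j)$, so each adversary edge sits in the same (unique) tree as its request. Consequently $\sum_{\mathcal{T}} \OPT(\mathcal{T}) = \sum_{j=1}^{n} d(r_j, s_j) = \OPT(I)$. With both identities in hand, the proof closes by applying \Cref{lm:bounded-tree} to each tree and summing:
\[
\Greedy_k(I) = \sum_{\mathcal{T}} \ON(\mathcal{T}) \leq \left(1 + \frac{2}{k-2}\right) \sum_{\mathcal{T}} \OPT(\mathcal{T}) = \left(1 + \frac{2}{k-2}\right) \OPT(I).
\]
The main (indeed only) obstacle is the counting step, and specifically confirming that the root-incident online edges — which live in no tree — are exactly accounted for by the augmented edge set in the definition of $\ON(\mathcal{T})$, so that the per-tree online and adversary costs sum without overlap or gap to the global quantities. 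Everything else is an immediate consequence of \Cref{lm:bounded-tree}.
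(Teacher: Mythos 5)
Your proof is correct and follows essentially the same route as the paper: apply \Cref{lm:bounded-tree} to each response tree and sum, using the fact that the decomposition partitions the online edges (via the root-incident edge being charged to its own tree's $\ON(\mathcal{T})$) and the adversary edges. The paper's proof is just a one-line version of this; your explicit verification of the bookkeeping is the same argument spelled out.
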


\begin{proof}
 Note that $\Greedy_k(I)$ is equal to the total cost of the online edges in the response graph, $\mathcal G$. Via our tree decomposition, this cost is
 
\[ \Greedy_k (I) = \sum_{\mathcal{T} \in \mathcal{G}} \ON(\mathcal{T}) \leq \left(1 + \frac{2}{k-2}\right)\sum_{\mathcal{T} \in \mathcal{G}} \OPT(\mathcal{T}) = \left(1 + \frac{2}{k-2}\right) \OPT(I) \]

\end{proof}

\begin{proof}[Proof of \Cref{thm:comp. general k}]
    Split each server site with online capacity $c_i= k a_i$ and adversary capacity $a_i $ into $a_i$ server sites with online capacity $k$ and adversary capacity 1. Clearly the optimal cost is the same because the underlying server locations have not changed. Further, $\Greedy_k$ assigns requests identically on both instances for the same reason, and so the online cost is the same as well. Thus \Cref{lm:ai=1 case} directly gives the desired result.

\end{proof}

\subsection{Algorithm Lower Bound} 

Lastly, we show the competitiveness bound of $1 + \frac{2}{k-2}$ for $\Greedy_k$ is essentially tight.

\begin{theorem}\label{thm:kweak_lowerbound}
    $\forall \ \epsilon > 0$, there is an instance $I_{\epsilon}$ where $\Greedy_k(I_{\epsilon}) > \left( 1 + \frac{2}{k-2}  - \epsilon \right) \OPT(I_{\epsilon})$. 
\end{theorem}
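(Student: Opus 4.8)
The plan is to exhibit, for each integer height $h$, an instance $I_h$ whose optimal matching is cheap while $\Greedy_k$ is forced into progressively longer assignments, so that the ratio $\Greedy_k(I_h)/\OPT(I_h)$ tends to $1 + \frac{2}{k-2}$ as $h \to \infty$; given $\eps$ I would then set $I_\eps = I_h$ for $h$ large enough. The instance is engineered to make every inequality in \Cref{lm:edge-bound}, \Cref{lm:ld-bound} and \Cref{lm:bounded-tree} simultaneously (nearly) tight. Concretely, I would take the metric to be the shortest-path metric of a weighted tree that is literally the intended response tree: a root request $r^{(0)}$ with adversary child $s^{(0)}$, each non-leaf server carrying exactly $k$ child requests, with edges alternating adversary/online through $h$ levels of requests and with the deepest (level-$(h-1)$) servers as leaves. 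Every adversary edge gets weight $1$, and the online edge joining a level-$(\ell-1)$ server to a level-$\ell$ request gets weight $g_\ell$, where $g_{h-1}=1$ and $g_\ell = 1 + 2g_{\ell+1}$, so that $g_\ell = 2^{h-\ell}-1$. This is exactly the choice forcing $ld(r^{(\ell)}) = W(r^{(\ell)}) = g_\ell$ at every level. Requests arrive bottom-up, the deepest first and the root last.

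The key claim is that on $I_h$ the greedy algorithm reconstructs this tree, matching each level-$\ell$ request (for $\ell \ge 1$) to the parent server $s^{(\ell-1)}$ at distance $g_\ell$. The mechanism is that, in bottom-up arrival order, by the time a request $r^{(\ell)}$ arrives its $k$ children have already been placed by greedy at its own cheap adversary server $s^{(\ell)}$, saturating all $k$ of that server's online slots; the cheap server is therefore full and greedy is forced one level up. I would verify by induction on the arrival order that the parent $s^{(\ell-1)}$ is indeed the nearest available server: every server of level $\ge \ell$ is already saturated, every other available server (empty ancestors, or servers in sibling subtrees) is reached only along a strictly longer tree path, and the one genuine tie — the nearest still-empty leaf inside $r^{(\ell)}$'s own subtree, which sits at distance $ld(r^{(\ell)}) = g_\ell$ equal to the up-edge — is broken in favor of going up by an arbitrarily small perturbation of the weights (for instance inflating the leaf-level adversary edges by $\delta$). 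This forces greedy's behavior irrespective of its tie-breaking rule and perturbs the final ratio by only $O(\delta)$.

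With this behavior established, I bound $\OPT(I_h)$ from above by the cost of the adversary's matching, in which each of the $k^\ell$ level-$\ell$ requests pays exactly $1$, so $\OPT(I_h) \le \sum_{\ell=0}^{h-1} k^\ell = \frac{k^h-1}{k-1}$, while $\Greedy_k(I_h) = \sum_{\ell=0}^{h-1} k^\ell g_\ell = \sum_{\ell=0}^{h-1} k^\ell(2^{h-\ell}-1)$. Evaluating these geometric sums, the ratio of $\Greedy_k(I_h)$ to the adversary cost equals $\frac{2(k-1)(k^h-2^h)}{(k-2)(k^h-1)} - 1$, which increases to $\frac{2(k-1)}{k-2}-1 = 1 + \frac{2}{k-2}$ as $h \to \infty$, approaching from below since $k > 2$. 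As $\OPT(I_h)$ is no larger than the adversary cost, $\Greedy_k(I_h)/\OPT(I_h)$ is at least this quantity, so choosing $h$ large (and $\delta$ small) yields $\Greedy_k(I_\eps) > \left(1 + \frac{2}{k-2} - \eps\right)\OPT(I_\eps)$. Note that the single large online edge at the root contributes only $g_0 = 2^h - 1$ against an optimum of order $k^h$, hence is asymptotically negligible and its exact endpoint is immaterial.

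The main obstacle is the inductive verification in the second paragraph: one must confirm that the bottom-up saturation schedule is globally consistent and that no empty higher-level or cross-subtree server ever undercuts the intended assignment, so that the parent server really is the unique nearest available server at each step. This is precisely where the alternating adversary/online tree metric and the arrival order must be chosen in concert, and where the tie-breaking perturbation is needed to make the forcing argument watertight; the remaining cost accounting is then routine.
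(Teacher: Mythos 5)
Your argument is correct, but it takes a genuinely different route from the paper's. The paper stays on the real line: it places $m$ sites at exponentially spaced points ($s_1$ at $-1$, $s_i$ at $2^{i-1}-1$), gives $s_i$ online capacity $k^{m-i+1}$ and adversary capacity $k^{m-i}$, and releases $m$ batches of collocated requests so that batch $i$ exactly depletes $s_{i+1}$ and forces the next batch one site further out; the online cost is then the same geometric sum $\sum_{i} k^{m-i}2^{i-1}$ you compute, against an optimal of $k^{m-1}$. Your instance instead realizes the response tree of the upper-bound analysis as a literal tree metric, one request per node, with weights $g_\ell = 2^{h-\ell}-1$ chosen precisely so that $d(s_{\sigma(j)}, r_j) = ld(r_j) = W(r_j)$ at every level, i.e., so that \Cref{lm:edge-bound}, \Cref{lm:ld-bound} and \Cref{lm:bounded-tree} all hold with equality; I checked your bottom-up saturation induction and the cost identity $\frac{2(k-1)(k^h-2^h)}{(k-2)(k^h-1)} - 1 \to 1+\frac{2}{k-2}$, and both are sound. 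What your approach buys is conceptual: it exhibits exactly which inequalities in the upper-bound proof are tight and hence why \Cref{thm:comp. general k} cannot be improved, and your explicit $\delta$-perturbation handles tie-breaking more carefully than the paper does (in the line construction each batch $i$ is in fact equidistant from $s_1$ and $s_{i+1}$, so the paper is implicitly assuming adversarial tie-breaking as well). What the paper's construction buys is economy: only $m$ distinct locations in the simplest possible metric, collocated batches, and a greedy behavior that is verifiable in two lines rather than via an induction over a tree with exponentially many distinct sites. The one point you flag as the "main obstacle" --- that no empty ancestor, uncle, or cross-subtree leaf undercuts the parent server --- does go through: all servers at levels $\geq \ell$ except the leaves are saturated before any level-$\ell$ request arrives, the nearest leaf sits at distance $g_\ell + \delta$ versus $g_\ell$ for the parent, and every other candidate is reached only by first passing through the parent and then traversing additional positive-weight edges, so the parent is the unique nearest available server and is never stolen by a non-child.
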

\begin{proof}
    We embed $m$ server sites on the real line. The server site $s_1$ is located at the point $-1$. For $2 \leq i \leq m$, the server site $s_i$ is located at $2^{i-1}-1$. The online algorithm has $c_i = k^{m-i+1}$ servers at site $s_i$, and the adversary has $a_i = k^{m-i}$ servers at site $s_i$. The requests occur in $m$ batches. The first batch consists of $k^{m-1}$ requests at 0. For $2 \leq i \leq m$, the $i$-th batch consists of $k^{m-i}$ requests at $s_i = 2^{i-1} - 1$. $\Greedy_k$ responds to batch $i \ (1\leq i<m)$ by answering each request in batch $i$ with server site $s_{i+1}$, thus depleting $s_{i+1}$. $\Greedy_k$ responds to batch $m$ by answering the sole request with site $s_1$.

    For batch 1, $\Greedy_k$ services $k^{m-1}$ requests, each of which requires a cost of $s_2 - 0 = 1$. For batch $i$, $1 < i < m$, $\Greedy_k$ services $k^{m-i}$ requests, each of which requires a cost of $s_{i+1} - s_i = (2^i - 1) - (2^{i-1} - 1) = 2^{i-1}$. For batch $m$, $\Greedy_k$ services 1 request, which requires a cost of $s_m - s_1 = (2^{m-1} - 1) - (-1) = 2^{m-1}$. Thus $\Greedy_k$ incurs a total cost of 
    \begin{align*}
        \Greedy_k(I_{\epsilon}) = k^{m-1} \cdot 1 + \sum_{i=2}^{m-1} k^{m-i} \cdot 2^{i-1} + 1 \cdot 2^{m-1}
        &= \sum_{i=1}^m k^{m-i} \cdot 2^{i-1} \\
        & = k^{m-1} \sum_{i=1}^m \frac{2^{i-1}}{k^{i-1}} \\
        &= k^{m-1} \sum_{i=1}^m \left(\frac{2}{k}\right)^{i-1} \\
        &= k^{m-1} \left(\frac{1 - \left(\frac{2}{k}\right)^m}{1 - \frac{2}{k}}\right) \\
        &= k^{m-1} \cdot \frac{k}{k-2}\left(1 - \left(\frac{2}{k}\right)^m\right) \\
        &= k^{m-1} \cdot \left(1 + \frac{2}{k-2}\right)\left(1 - \left(\frac{2}{k}\right)^m\right)
    \end{align*} 

    The adversary could respond to the requests by servicing batch $i$ with server site $s_i$. The adversary would incur a cost of $k^{m-1}$ for batch 1, and a cost of 0 for batches $i$, $2 \leq i \leq m$. Then the adversary can achieve a total cost of $\OPT(I_{\epsilon}) \leq k^{m-1}$. Thus, we have
    \[
    \frac{\Greedy_k(I_{\epsilon})}{\OPT(I_{\epsilon})} \geq \frac{k^{m-1} \cdot \left(1 + \frac{2}{k-2}\right)\left(1 - \left(\frac{2}{k}\right)^m \right)}{k^{m-1}} = \left(1 + \frac{2}{k-2}\right)\left(1 - \left(\frac{2}{k}\right)^m\right)
    \]
    
    For any $\epsilon > 0$, for sufficiently large $m$, 
    $\left(\frac{2}{k}\right)^m < \left(\frac{k-2}{k}\right)\epsilon$ giving 
    \[
    \left(1 + \frac{2}{k-2}\right) \left(1 - \left(\frac{2}{k}\right)^m\right) > \left(1 + \frac{2}{k-2}\right)\left(1 - \left(\frac{k-2}{k}\right)\epsilon\right) = 1 + \frac{2}{k-2} - \epsilon
    \] 
    
    Thus for sufficiently large $m$,
    \[
    \Greedy_k(I_{\epsilon}) > \left( 1 + \frac{2}{k-2}  - \epsilon \right) \OPT(I_{\epsilon})
    \]
    \end{proof}

\bibliographystyle{abbrv}
\bibliography{bib.bib}

\end{document}